	\def\subparagraph{\@startsection{paragraph}{5}{2\parindent}{0ex plus 0.1ex minus 0.1ex}%
		{0ex}{\normalfont\small\itshape}}%
\def\URL#1#2{\@ifundefined{href}{#2}{\href{#1}{#2}}}
\def\UrlOrds{\do\*\do\-\do\~\do\'\do\"\do\-}%
\g@addto@macro{\UrlBreaks}{\UrlOrds}
\renewenvironment{abstract}
{\trivlist\item[]\leftskip0pt\par\vskip4pt\noindent
	\textbf{\abstractname}\mbox{\null}\\}
{\par\noindent\endtrivlist}
\def\keywords#1{\par\medskip\par\noindent\textbf{Keywords}: #1\par}
\date{} \emergencystretch 8pt
\def\author#1{\gdef\@author{\hskip-\tabcolsep%
		\parbox{\textwidth}{\raggedright\bfseries#1\\[1pc]}}}
\def\address[#1]#2{\g@addto@macro\@author{\\\hskip-\tabcolsep\parbox{\textwidth}{\raggedright%
			\normalsize\normalfont\textsuperscript{#1}#2}}}
\def\correspondence#1{\g@addto@macro\@author{\\\hskip-\tabcolsep\parbox{\textwidth}{\raggedright%
			\vspace*{10pt}\normalsize\normalfont~\\#1~\\[12pt]}}}
\def\email#1{\g@addto@macro\@author{\\\hskip-\tabcolsep\parbox{\textwidth}{\raggedright%
			\normalsize\normalfont Emails: #1}}}
\def\title#1{\gdef\@title{\vspace*{-30pt}%
		\raggedright\textbf{\@journaltitle}~\\%
		\raggedright\bfseries\ifx\@articleType\@empty\vspace*{20pt}\else%
		\vspace*{20pt}\@articleType\vspace*{20pt}\\\fi#1}}
\let\@journaltitle\@empty \def\journaltitle#1{\gdef\@journaltitle{{\normalfont\itshape#1}}}
\let\@articleType\@empty \def\articletype#1{\gdef\@articleType{{\normalfont\itshape#1}}}
\let\@runningHead\@empty \def\RunningHead#1{\gdef\@runningHead{{\normalfont #1}}}
\newtheorem{thm}{\bf Theorem}
\newtheorem*{proof}{Proof}
\begin{document}
	\title{A general epidemic model and its application to mask design considering different preferences towards masks}
	
	\author{%
		Chaoqian Wang and Hamdi Kavak
	}
	
	\address[]{Department of Computational and Data Sciences, George Mason University, Fairfax, VA 22030, USA}

	\correspondence{}
	\email{Chaoqian Wang <CqWang814921147@outlook.com>, <cwang50@gmu.edu>; Hamdi Kavak <hkavak@gmu.edu>}

	\RunningHead{Running head}
	
	\maketitle 
	
	\begin{abstract}
		While most masks have a limited effect on personal protection, how effective are they for collective protection? How to enlighten the design of masks from the perspective of collective dynamics? In this paper, we assume three preferences in the population: (i) never wearing a mask; (ii) wearing a mask if and only if infected; (iii) always wearing a mask. We study the epidemic transmission in an open system within the Susceptible-Infected-Recovered (SIR) model framework. We use agent-based Monte Carlo simulation and mean-field differential equations to investigate the model, respectively. Ternary heat maps show that wearing masks is always beneficial in curbing the spread of the epidemic. Based on the model, we investigate the potential implications of different mask designs from the perspective of collective dynamics. The results show that strengthening the filterability of the mask from the face to the outside is more effective in most parameter spaces, because it acts on individuals with both preferences (ii) and (iii). However, when the fraction of individuals always wearing a mask achieves a critical point, strengthening the filterability from outside to the face becomes more effective, because of the emerging hidden reality that the infected individuals become too few to utilize the filterability from their face to outside fully.
		
		\keywords{Epidemic model; Mask; COVID-19; Verification and validation; Lyapunov function}
	\end{abstract}
	
	\begin{multicols}{2}
		\section{Introduction}\label{intro}
		As the COVID-19 pandemic is ravaging the world, the protection of masks is a topic of interest. A news article published in Nature indicates that wearing a surgical mask leads to an $11\%$ drop in risk, while a $5\%$ drop for cloth \cite{peeples2021face}. The protective effect of masks on individuals may seem minimal, but it is also necessary to focus on the protective effect on collectives.
		
		Since Kermack and McKendrick \cite{kermack1927contribution} proposed the Susceptible-Infected-Recovered (SIR) compartment model, various epidemic models have been developed considerably. In the classic SIR model, the population is divided into three compartments: (i) the susceptible ($S$); (ii) the infected ($I$); (ii) the recovered ($R$). Through human-to-human contact or self-healing, individuals flow from one compartment to another. A simple modified version is the SEIR model, which adds an exposed ($E$) compartment to the SIR model. Recently, Barlow {\it et al.} \cite{barlow2020accurate,weinstein2020analytic} derived the analytical solutions of the SIR \cite{barlow2020accurate} and the SEIR \cite{weinstein2020analytic} models. 
		
		Researchers, in recent years, have explored additional factors and mechanisms to the classic epidemic models, such as isolation \cite{wang2020epidemic} and vaccination \cite{wang2019optimal,fu2011imitation,wang2020vaccination,alam2019three,kuga2018more,alam2019game}. The dynamics of the epidemic transmission can also be applied to the information spreading, creating rumor spreading models \cite{zhao2011rumor,zhao2013rumor} or the public opinion dynamics model \cite{wang2020dynamics,wang2021injurious}. From the perspective of verification and validation, the global stability of this class of nonlinear dynamical systems is widely studied \cite{vargas2011global,li2012algebraic,side2016global,guo2006global,sun2010global,muroya2013global}. In particular, researchers have proved the global stability of endemic equilibria in various epidemic models in multigroup populations \cite{guo2006global,sun2010global,muroya2013global}, which are general cases of the model proposed in this work. A common approach to prove global stability is constructing a Lyapunov function (not limited to epidemiology, but also widely applied to other complex systems such as evolutionary dynamics \cite{cheng2021behavioral,cheng20222pns}), which measures the system's ``energy.'' If the energy continues to decay, then the system will stabilize at an equilibrium point. 
		
		When it comes to the protective effect of masks, several works \cite{li2020effect,gondim2021preventing,auger2021threshold,lasisi2021modeling,han2021effects} are noticed to have emerged in the COVID-19 period after 2020. Li {\it et al.} \cite{li2020effect} treated whether people wear masks or not as an evolutionary game. Gondim \cite{gondim2021preventing} considered masks in the SEIR model and validated the model by real-world data. Auger and Moussaoui \cite{auger2021threshold} studied the confinement’s release threshold, taking the masks into account. Lasisi and Adeyemo \cite{lasisi2021modeling} modeled the effect of wearing masks on COVID-19 infection dynamics. Han {\it et al.} \cite{han2021effects} investigated the effect of three different preferences on wearing a mask. 
		
		Based on the existing literature, we find the previous works on masks have three shortcomings. First, when classifying the population into three categories with different preferences on wearing masks according to their assumptions, there is no work classifying them into three independent variables. They set only two variables as the fraction of two categories, and the remaining category’s fraction is one minus these two variables. This leads to an inability to ensure constant relative proportions of the other two categories when investigating the effect of the proportion of a certain category. Second, previous work did not carry out a complete analysis of the stability of their models. This makes verification and validation challenging. Third, only focusing on the effect of masks on epidemic spreading, there is no previous work considering providing applications of the epidemic models to the design of masks itself.
		
		This paper builds a general epidemic model in an open system considering three different preferences on wearing masks. We start from a set of agent-based rules, and use mean-field analysis to verify and validate the model. In addition to filling in the gaps of previous work by treating three preferences as independent variables and considering global stability analysis, we explore \cite{page2018model,serge2016sociophysics} the effect of different preferences towards wearing masks on the epidemic transmission through our model's eyes. Considering that in the traditional perception, masks are designed at an individual level, we also try to reveal the design strategies of the masks by the collective dynamics based on our model.
		
		\section{Model}\label{model}
		There is an epidemic disease spreading in the system. To prevent this epidemic, individuals hold different preferences for wearing masks. Concerning the infection state, we divide the population into: (i) the susceptible ($x$); (ii) the infected ($y$); (iii) the recovered ($z$). In terms of different preferences towards wearing masks, we divide the population into: (i) those who never wear masks (subscript 0); (ii) those who wear masks if and only if infected (subscript 1); (iii) those who always wear masks (subscript 2). Therefore, we have up to 9 categories according to different combinations of the classification of the two dimensions mentioned above. 
		
		Before describing evolutionary rules, we list the definition of our mathematical symbols in Table~\ref{tab1}.
		
		\begin{table*}[!htbp]
			\caption{\label{tab1}The definition of mathematical symbols}
			\centering
			\begin{tabular}{ll}
				\hline
				Symbol & Definition                                                                   \\ \hline
				$x_0$ & The number of susceptible individuals never wearing a mask.                   \\
				$y_0$ & The number of infected individuals never wearing a mask.                      \\
				$z_0$ & The number of recovered individuals never wearing a mask.                     \\
				$x_1$ & The number of susceptible individuals wearing a mask if and only if infected. \\
				$y_1$ & The number of infected individuals wearing a mask if and only if infected.    \\
				$z_1$ & The number of recovered individuals wearing a mask if and only if infected.   \\
				$x_2$ & The number of susceptible individuals always wearing a mask.                  \\
				$y_2$ & The number of infected individuals always wearing a mask.                     \\
				$z_2$ & The number of recovered individuals always wearing a mask.                    \\
				$n$ & The number of individuals in the system.                                      \\
				$\Lambda$ & The number of new individuals entering the system within unit time.           \\
				$\mu$ & The rate of natural death.                                                    \\
				$r$ & The rate of recovering.                                                       \\
				$\alpha$ & The rate of human-to-human infection.                                         \\
				$\varepsilon_0$ & The fraction of new individuals never wearing a mask.                         \\
				$\varepsilon_1$ & The fraction of new individuals wearing a mask if and only if infected.       \\
				$\varepsilon_2$ & The fraction of new individuals always wearing a mask.                        \\
				$p_I$ & The protective effect produced when an infected individual wears a mask.      \\
				$p_S$ & The protective effect produced when a susceptible individual wears a mask.    \\ \hline
			\end{tabular}
		\end{table*}
		
		\subsection{The agent-based rules}\label{abm}
		
		Consider an open system containing initially $n\big|_{t=0}$ agents (i.e., individuals). Within a Monte Carlo step, an agent $i$ is randomly selected, and the following parallel events occur.
		
		(1) If agent $i$ is susceptible, we again select an agent $j$ randomly. If agent $j$ is infected, then agent $i$ is infected with a probability $\alpha$ ($\alpha>0$). If agent $j$ wears a mask, then agent $i$ spares from infection with a probability $p_I$ ($0<p_I<1$). If agent $i$ wears a mask, then agent $i$ spares from infection with a probability $p_S$ ($0<p_S<1$). 
		
		(2) If agent $i$ is infected, then it recovers with a probability $r$ (the average infection cycle is $1/r$). This does not happen at the same Monte Carlo step as the event~(1). 
		
		(3) Agent $i$ naturally dies with a probability $\mu$ (the average lifespan is $1/\mu$). We do not consider deaths due to the epidemic. 
		
		To ensure the population remains almost unchanged, we must let new agents enter the system. We set the following very first event in a Monte Carlo step, where the number (0) means it happens before the event~(1).
		
		(0) A new agent enters the system with a probability $p_e$. The agent's personal preference determines it never wears a mask with a probability $\varepsilon_0$ ($0<\varepsilon_0<1$), wears a mask if and only if infected with a probability $\varepsilon_1$ ($0<\varepsilon_1<1$), or always wears a mask with a probability $\varepsilon_2$ ($0<\varepsilon_2<1$), yielding $\varepsilon_0+\varepsilon_1+\varepsilon_2=1$. The preference of an agent on masks does not change over time. 
		
		For the population $n$ remaining almost unchanged with time, we let one time step $t$ contains $n\big|_{t=0}$ Monte Carlo steps, such that each agent can be selected once on average. Therefore, the expected number of new agents entering the system within one time step $t$ is $\Lambda=n^* p_e$. The solution is $p_e=\mu$ (see Theorem~\ref{thm1}). 
		
		\subsection{The mean-field equations}\label{meanf}
		
		Performing mean-field analysis, we can approximate the agent-based dynamics into a set of differential equations. We do not dwell on the principles of mean-field analysis, but only explain some important points. (i) Within unit time, each agent is selected once on average, such that the number of events descending in a category is the population in the category. (ii) The probability of selecting an infected agent never wearing a mask from the population is $y_0/n$, and $y_1/n$, $y_2/n$ for selecting an infected agent holding the other two preferences, respectively. (iii) Thanks to the mask, the probability of sparing from infection is $p_I$ or $p_S$, which means the probability of infection is $(1-p_I)$ or $(1-p_S)$. (iv) If there are two layers of protection, they must be both breached for the infection to succeed. 
		
		We denote the state of the system by a vector $\mathbf{\Psi}$, containing the population in nine categories. The mean-field differential equations depicting the agent-based dynamics is 
		\begin{equation}\label{system}
			\dot{\mathbf{\Psi}}=
			\begin{pmatrix}
				\dot{x}_0 \\
				\dot{y}_0 \\
				\dot{z}_0 \\
				\dot{x}_1 \\
				\dot{y}_1 \\
				\dot{z}_1 \\
				\dot{x}_2 \\
				\dot{y}_2 \\
				\dot{z}_2
			\end{pmatrix},
		\end{equation}
		where
		\begin{equation*}
			\left\{\begin{aligned}
				\dot{x}_0=&~\varepsilon_0 \Lambda-\alpha x_0 [y_0+(1-p_I)(y_1+y_2)]/n-\mu x_0, \\
				\dot{y}_0=&~\alpha x_0 [y_0+(1-p_I)(y_1+y_2)]/n-ry_0-\mu y_0, \\
				\dot{z}_0=&~ry_0-\mu z_0, \\
				\dot{x}_1=&~\varepsilon_1 \Lambda-\alpha x_1 [y_0+(1-p_I)(y_1+y_2)]/n-\mu x_1, \\
				\dot{y}_1=&~\alpha x_1 [y_0+(1-p_I)(y_1+y_2)]/n-ry_1-\mu y_1, \\
				\dot{z}_1=&~ry_1-\mu z_1, \\
				\dot{x}_2=&~\varepsilon_2 \Lambda-\alpha (1-p_S)x_2 [y_0+(1-p_I)(y_1+y_2)]/n \\
				&-\mu x_2, \\
				\dot{y}_2=&~\alpha (1-p_S)x_2 [y_0+(1-p_I)(y_1+y_2)]/n-ry_2 \\
				&-\mu y_2, \\
				\dot{z}_2=&~ry_2-\mu z_2.
			\end{aligned}\right.
		\end{equation*}
		
		The system depicted by Eq.~(\ref{system}) is an extended version of the SIR model with a standard incidence rate.
		
		\section{Results and discussion}\label{result}
		In this section, we demonstrate the numerical results of the model from both Monte Carlo simulation and mean-field equations. The algorithm of the Monte Carlo simulation was described in Section~\ref{abm}. In the numerical simulation of mean-field equations, we use iteration $\mathbf{\Psi}(t+\Delta t)=\mathbf{\Psi}(t)+\dot{\mathbf{\Psi}}(t)\Delta t$, where $\Delta t=0.01$. 
		
		We set three statistical measures: (i) the proportion of susceptible, $p_x=(x_0+x_1+x_2)/n$; (ii) the proportion of infected, $p_y=(y_0+y_1+y_2)/n$; (iii) the proportion of recovered, $p_z=(z_0+z_1+z_2)/n$. 
		
		\subsection{Impact of masks on epidemic spreading}\label{effectmask}
		Figures~\ref{time1} and \ref{time2} show the time evolution of $p_x$, $p_y$, and $p_z$ with different parameters and initial conditions.
		
		\begin{figure*}[!htbp]
			\centering 
			\includegraphics[width=12cm]{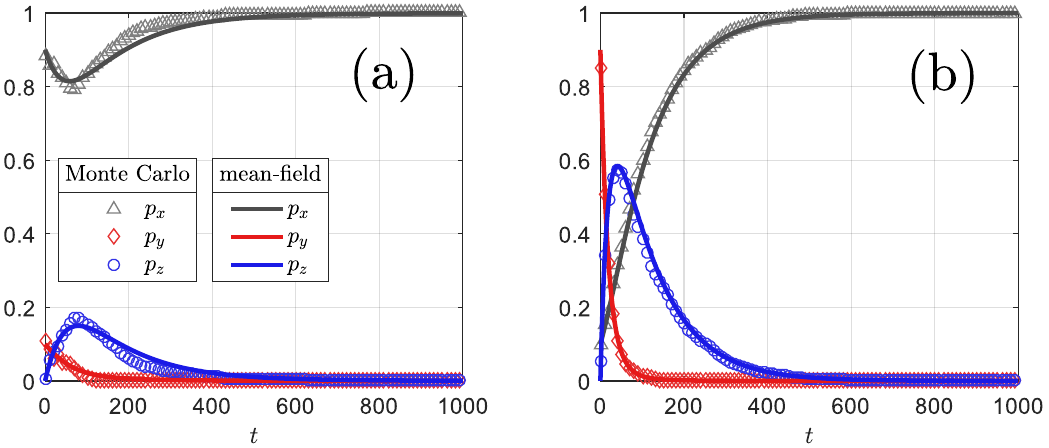}
			\caption{\label{time1}
				Time evolution of $p_x$, $p_y$, and $p_z$. The results of Monte Carlo simulation and mean-field equations are presented together. $\alpha=0.2$, $\mu=0.01$, $\Lambda=10$, $r=0.05$, $\varepsilon_0=0.1$, $\varepsilon_1=0.1$, $\varepsilon_2=0.8$, $p_I=0.75$, $p_S=0.25$. (a) $p_x\big|_{t=0}=0.9$, $p_y\big|_{t=0}=0.1$, $p_z\big|_{t=0}=0$. (b) $p_x\big|_{t=0}=0.1$, $p_y\big|_{t=0}=0.9$, $p_z\big|_{t=0}=0$. (a)(b) $x_0\big|_{t=0}=x_1\big|_{t=0}=x_2\big|_{t=0}$ in $x\big|_{t=0}$, as well as $y\big|_{t=0}$ and $z\big|_{t=0}$.}
		\end{figure*}
		
		\begin{figure*}[!htbp]
			\centering 
			\includegraphics[width=12cm]{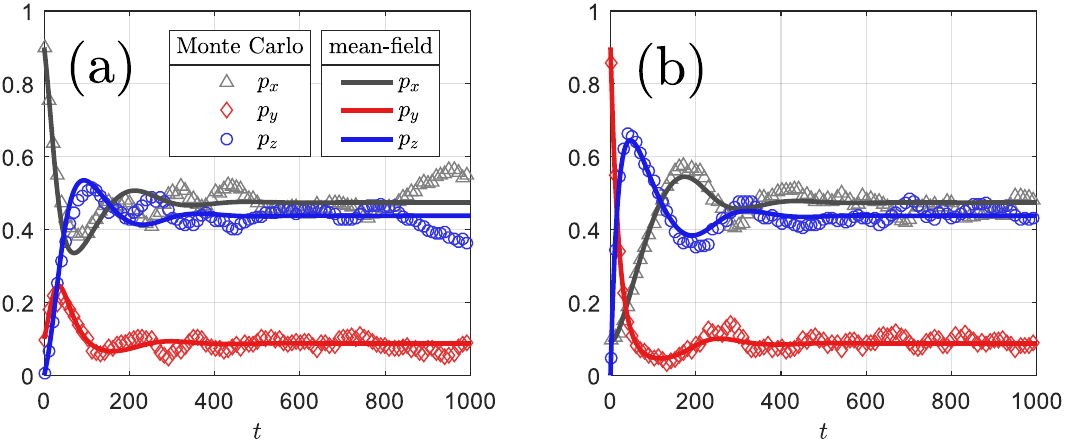}
			\caption{\label{time2}
				Time evolution of $p_x$, $p_y$, and $p_z$. The results of Monte Carlo simulation and mean-field equations are presented together. $\alpha=0.2$, $\mu=0.01$, $\Lambda=10$, $r=0.05$, $\varepsilon_0=0.3$, $\varepsilon_1=0.1$, $\varepsilon_2=0.6$, $p_I=0.5$, $p_S=0.05$. (a) $p_x\big|_{t=0}=0.9$, $p_y\big|_{t=0}=0.1$, $p_z\big|_{t=0}=0$. (b) $p_x\big|_{t=0}=0.1$, $p_y\big|_{t=0}=0.9$, $p_z\big|_{t=0}=0$. (a)(b) $x_0\big|_{t=0}=x_1\big|_{t=0}=x_2\big|_{t=0}$ in $x\big|_{t=0}$, as well as $y\big|_{t=0}$ and $z\big|_{t=0}$.}
		\end{figure*}
		
		From Figures~\ref{time1} and \ref{time2}, we find the proportions of different individuals always achieve stability after time evolution. The results of the Monte Carlo simulation fluctuate, while the results of mean-field equations are stable. They corroborate each other. In addition, we obverse two phenomena. First, the epidemic may either die out or exist at the end, dependent on different parameters. Second, with the same parameters and different initial conditions, the steady-states are the same. 
		
		Next, in the heat maps Figure~\ref{sta1} and \ref{sta2}, we present the steady-states of $p_y$ and $p_x$ as a ternary function of $\varepsilon_0$, $\varepsilon_1$, $\varepsilon_2$, respectively. The results of the Monte Carlo simulation are the average of the last 200 time steps ($t$). The results of mean-field equations are retrieved from the state $\mathbf{\Psi}(t+\Delta t)$ when $\max\{|\mathbf{\Psi}(t)|\Delta t\}<0.0001$.
		
		\begin{figure*}[!htbp]
			\centering 
			\includegraphics[width=14.5cm]{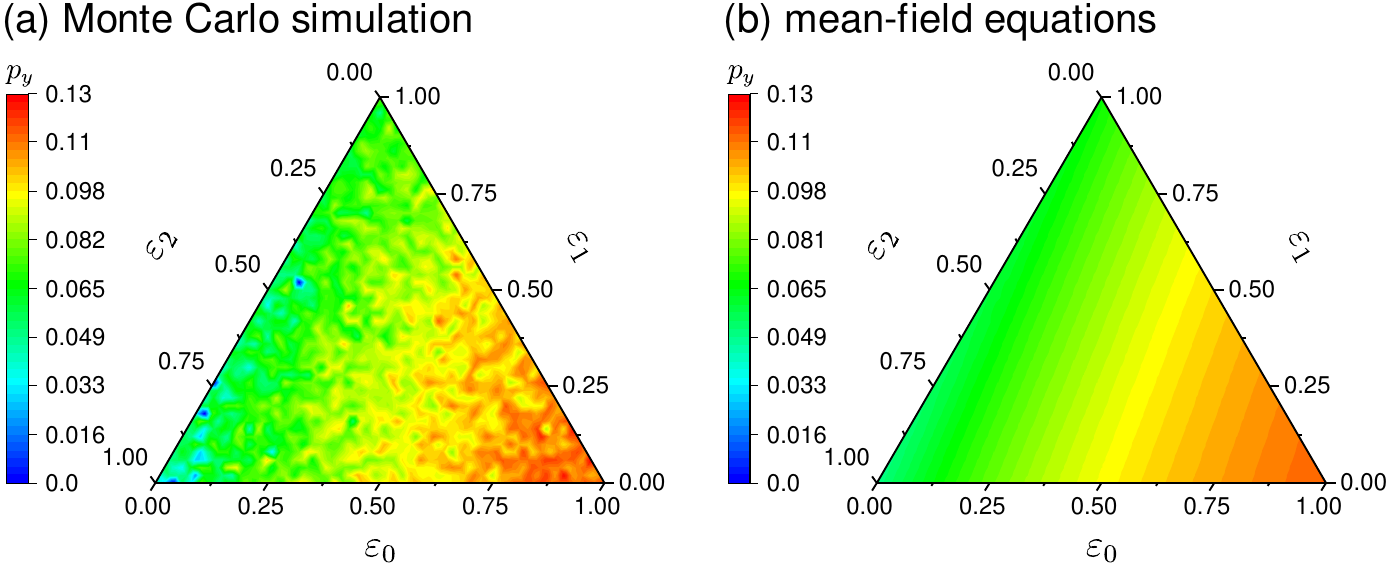}
			\caption{\label{sta1}
				The steady-state of $p_y$ as a ternary function of $\varepsilon_0$, $\varepsilon_1$, $\varepsilon_2$. (a) Monte Carlo simulation. (b) Mean-field equations. $\alpha=0.2$, $\mu=0.01$, $\Lambda=10$, $r=0.05$, $p_I=0.5$, $p_S=0.05$.}
		\end{figure*}
		
		\begin{figure*}[!htbp]
			\centering 
			\includegraphics[width=14.5cm]{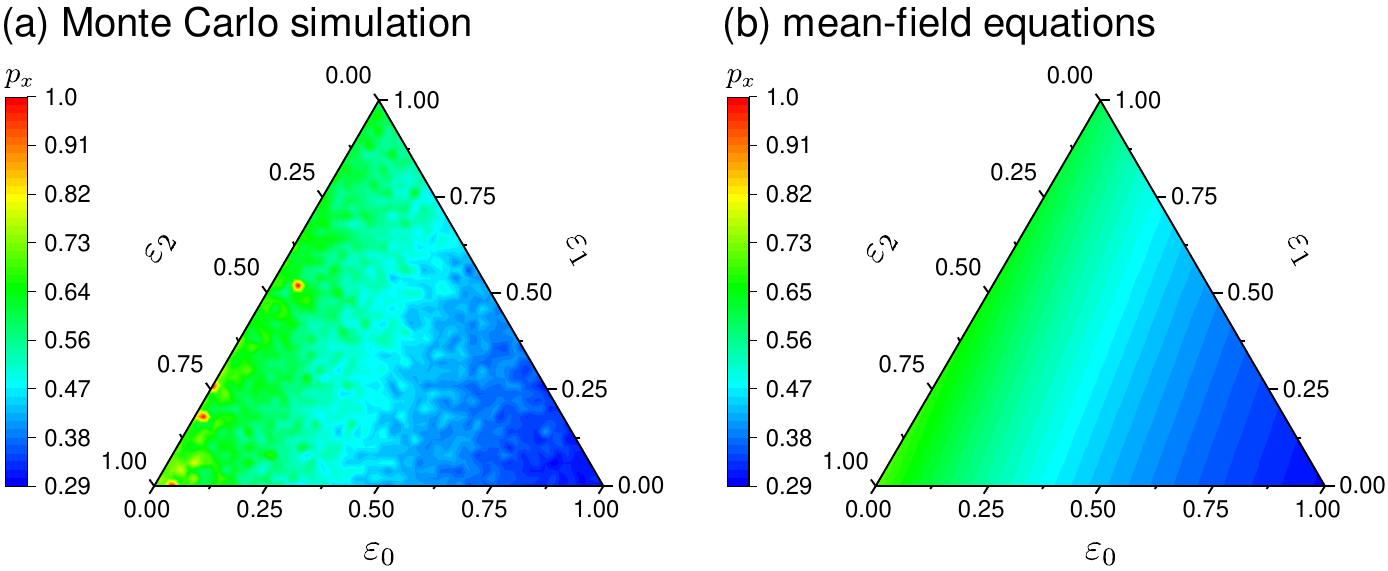}
			\caption{\label{sta2}
				The steady-state of $p_x$ as a ternary function of $\varepsilon_0$, $\varepsilon_1$, $\varepsilon_2$. (a) Monte Carlo simulation. (b) Mean-field equations. $\alpha=0.2$, $\mu=0.01$, $\Lambda=10$, $r=0.05$, $p_I=0.5$, $p_S=0.05$.}
		\end{figure*}
		
		From Figures~\ref{sta1} and \ref{sta2}, we find that the results of Monte Carlo simulation and mean-field equations corroborate each other. In Figure~\ref{sta1}, we observe that more individuals wearing masks reduce the proportion of infected individuals in the population. In particular, always wearing a mask has a better effect on reducing infected individuals. In Figure~\ref{sta2}, we observe that more individuals wearing masks increases the proportion of susceptible individuals in the population. Different from increasing recovered individuals, it means that more people are spared from getting infected once. Also, always wearing a mask has a better effect on increasing susceptible individuals (Figure~\ref{sta2}), making more individuals spared from being infected. 
		
		\subsection{Potential implication of different mask design}\label{applymask}
		Based on our model, we can reveal the potential implication of different mask designs. When an infected individual wears a mask, it is the filterability from the face to the outside that provides protection (to the population), and when a susceptible individual wears a mask, the filterability from the outside to the face matters. 
		
		We show in Figure~\ref{maskfig}~(a) and (b) the steady infected fraction $p_y$ as a binary function of the protective effect produced when an infected individual wears a mask ($p_I$) and when a susceptible individual wears a mask ($p_S$). 
		
		\begin{figure*}[!htbp]
			\centering 
			\includegraphics[width=\textwidth]{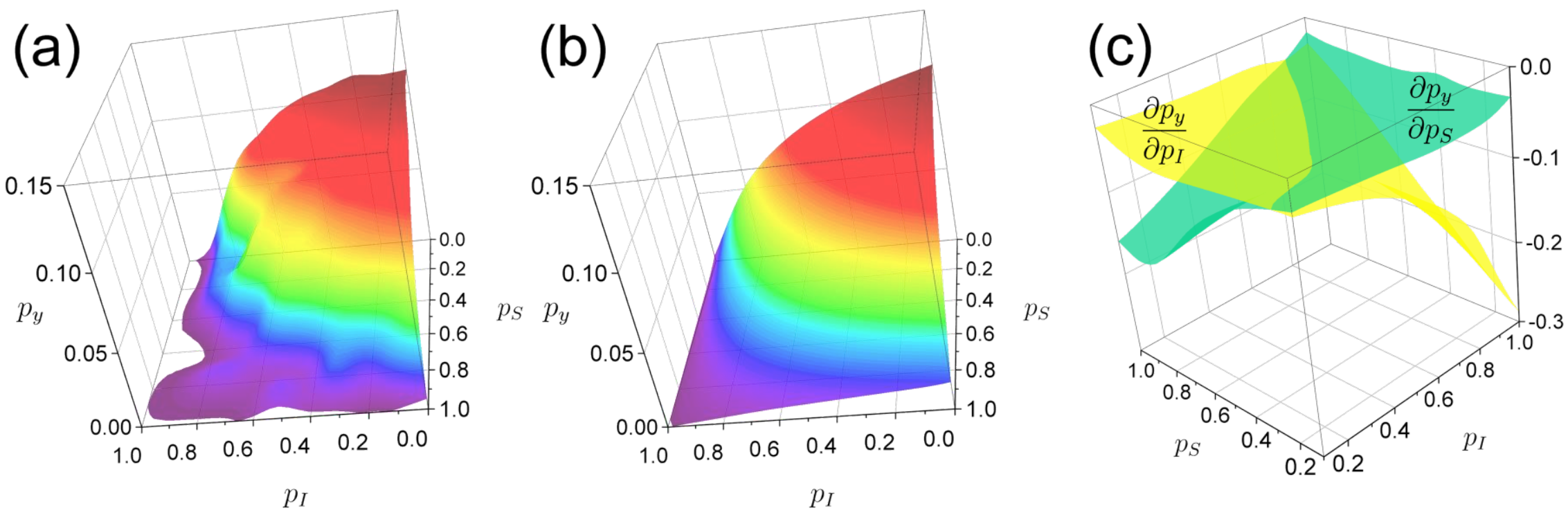}
			\caption{\label{maskfig}
				(a) Monte Carlo simulation. The steady-state of $p_y$ as a binary function of $p_I$, $p_S$. (b) Mean-field equations. The steady-state of $p_y$ as a binary function of $p_I$, $p_S$. (c) Mean-field equations. The steady-state of $\partial p_y/\partial p_I$ and $\partial p_y/\partial p_S$ as a binary function of $p_I$, $p_S$. $\alpha=0.2$, $\mu=0.01$, $\Lambda=10$, $r=0.05$, $\varepsilon_0=0.3$, $\varepsilon_1=0.1$, $\varepsilon_2=0.6$, $p_I=0.5$, $p_S=0.05$.}
		\end{figure*}
		
		We can observe that in both the Monte Carlo simulation [Figure~\ref{maskfig}(a)] and mean-field equations [Figure~\ref{maskfig}(b)], an increase in protective effect $p_I$ or $p_S$ leads to a decrease in the steady infected fraction $p_y$. On this basis, we further ask which one in increasing $p_I$ or $p_S$ is more effective in reducing the infected fraction? 
		
		We show in Figure~\ref{maskfig}(c) the steady-state of $\partial p_y/\partial p_I$ and $\partial p_y/\partial p_S$ as a binary function of $p_I$ and $p_S$. If $\partial p_y/\partial p_I<\partial p_y/\partial p_S$, increasing the unit protective effect from the infected side is more conducive to reducing the infected fraction, and vice versa. Intuitively, increasing $p_I$ should have always been more conducive than increasing $p_S$, because the former acts on individuals with two preferences, (ii) those who wear masks if and only if infected and (iii) those who always wear masks. In contrast, the latter only acts on individuals with one preference, (iii) those who always wear masks. Increasing $p_I$ obviously has a broader scope of action than the increasing $p_S$ and covers the latter’s population. However, Figure~\ref{maskfig}(c) presents a different phenomenon. This indicates that we can provide the designs of the masks with different insights from the group dynamics. We will give this further analysis in Section~\ref{masksec}.
		
		\section{Verification and validation}\label{veri}
		This section verifies and validates the properties that we concluded in numerical results by analyzing them at a mathematical level.
		
		\subsection{The total population dynamics}\label{pop}
		The total population dynamics follows Theorem~\ref{thm1}.
		
		\begin{thm}\label{thm1}
			For $t\to \infty$, we have $n\to \Lambda/\mu$. 
		\end{thm}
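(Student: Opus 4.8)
The plan is to collapse the nine-dimensional system in Eq.~(\ref{system}) onto a single scalar equation for the total population $n=x_0+y_0+z_0+x_1+y_1+z_1+x_2+y_2+z_2$ and then read off its limit. First I would add the three equations belonging to preference group $0$. The incidence term $\alpha x_0[y_0+(1-p_I)(y_1+y_2)]/n$ enters $\dot{x}_0$ with a minus sign and $\dot{y}_0$ with a plus sign, so it cancels, and the recovery term $ry_0$ cancels between $\dot{y}_0$ and $\dot{z}_0$; what survives is
\begin{equation*}
\dot{x}_0+\dot{y}_0+\dot{z}_0=\varepsilon_0\Lambda-\mu(x_0+y_0+z_0).
\end{equation*}
The identical cancellation occurs in groups $1$ and $2$ (the extra factor $(1-p_S)$ changes nothing, since each incidence term still appears once with each sign), giving the analogous identities with $\varepsilon_1$ and $\varepsilon_2$.

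Next I would sum the three group totals and invoke the normalization $\varepsilon_0+\varepsilon_1+\varepsilon_2=1$ to obtain the scalar linear ODE
\begin{equation*}
\dot{n}=\Lambda-\mu n.
\end{equation*}
Its unique equilibrium is $n^*=\Lambda/\mu$, and solving explicitly gives $n(t)=\Lambda/\mu+\left(n(0)-\Lambda/\mu\right)e^{-\mu t}$. Since $\mu>0$, the exponential decays and $n(t)\to\Lambda/\mu$ as $t\to\infty$, which is the claim. (This also supplies the relation used in Section~\ref{abm}: combined with $\Lambda=n^* p_e$ it forces $p_e=\mu$.)

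The only point requiring any care is the pairwise cancellation of the nonlinear incidence terms: although each carries the awkward $1/n$ denominator, it appears exactly twice in the full sum with opposite signs, so the nonlinearity vanishes identically and no bound on the individual compartments is needed. Beyond this bookkeeping there is no genuine obstacle---the theorem reduces to the elementary stability of a one-dimensional constant-coefficient linear equation.
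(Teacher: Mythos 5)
Your proposal is correct and follows essentially the same route as the paper: summing all nine equations of Eq.~(\ref{system}) (the incidence and recovery terms cancel pairwise) to obtain $\dot{n}=\Lambda-\mu n$, solving this linear ODE explicitly, and letting the exponential term decay. Your group-by-group bookkeeping of the cancellations, including the observation that the factor $(1-p_S)$ is harmless, is merely a more detailed writeup of the paper's one-line summation, and matches its derivation of $x_i+y_i+z_i\to\varepsilon_i\Lambda/\mu$ as well.
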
 
		\begin{proof}
			According to Eq.~(\ref{system}),
			\begin{align}\label{popdynamics}
				\dot{n}&=\dot{x_0}+\dot{y_0}+\dot{z_0}+\dot{x_1}+\dot{y_1}+\dot{z_1}+\dot{x_2}+\dot{y_2}+\dot{z_2}\nonumber \\
				&=\Lambda-\mu n.
			\end{align}
			Solving Eq.~(\ref{popdynamics}), we get
			\begin{equation}\label{ntime}
				n=\left(
				n\big|_{t=0}-\frac{\Lambda}{\mu}
				\right)
				\mathrm{e}^{-\mu t}+\frac{\Lambda}{\mu}.
			\end{equation}
			From Eq.~(\ref{ntime}), we complete the proof that $n\to \Lambda/\mu$ for $t\to \infty$. We give this significant value a symbol $n^*$,
			\begin{equation}\label{nvalue}
				n^*=\lim_{t\to \infty}n=\frac{\Lambda}{\mu}.
			\end{equation}
			In the same way, we can also prove that $x_0+y_0+z_0\to \varepsilon_0 \Lambda/\mu$, $x_1+y_1+z_1\to \varepsilon_1 \Lambda/\mu$, $x_2+y_2+z_2\to \varepsilon_2 \Lambda/\mu$ for $t\to \infty$.
		\end{proof}
		
		Theorem~\ref{thm1} gives us another important insight: when discussing the steady-state, we can substitute $n$ for $n^*=\Lambda/\mu$ in the system of Eq.~(\ref{system}).
		
		\subsection{The basic reproduction number}\label{r0sec}
		The basic reproduction number $\mathcal{R}_0$ is one of the most important measures in epidemiology. It can assist in analyzing both the stability of the system and the effect of parameters on the epidemic transmission. 
		
		We let $\dot{\mathbf{\Psi}}=\mathbf{0}$ and solve for the epidemic-free equilibrium, denoted by $\mathbf{\Psi}^*$, 
		\begin{equation}\label{epifree}
			\mathbf{\Psi}^*=\frac{\Lambda}{\mu}\left(\varepsilon_0,0,0,\varepsilon_1,0,0,\varepsilon_2,0,0\right)^\mathrm{T}.
		\end{equation}
		
		Then, we can follow the method in Ref.~\cite{van2002reproduction} to find the basic reproduction number (see Appendix~\ref{reprodu}):
		\begin{equation}\label{r0}
			\mathcal{R}_0=\frac{\alpha}{r+\mu}[\varepsilon_0+(1-p_I) \varepsilon_1+(1-p_S)(1-p_I) \varepsilon_2].
		\end{equation}
		
		The basic reproduction number reveals the following theorem.
		
		\begin{thm}\label{thm2}
			The epidemic-free equilibrium $\mathbf{\Psi}^*$ is locally asymptotically stable if $\mathcal{R}_0<1$, and the epidemic-free equilibrium $\mathbf{\Psi}^*$ is not stable if $\mathcal{R}_0>1$.
			
			(See Ref.~\cite{van2002reproduction} for proof)
		\end{thm}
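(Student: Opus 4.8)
The plan is to follow the next-generation matrix framework of van den Driessche and Watmough \cite{van2002reproduction}, which is exactly the method already invoked in Appendix~\ref{reprodu} to derive $\mathcal{R}_0$. The statement is a direct corollary of their general stability theorem, so the real work lies in verifying that the hypotheses of that framework are satisfied here and in exhibiting the block structure of the Jacobian at $\mathbf{\Psi}^*$.

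First I would isolate the infected compartments $(y_0,y_1,y_2)$ and split their dynamics as $\dot{\mathbf{y}}=\mathcal{F}(\mathbf{\Psi})-\mathcal{V}(\mathbf{\Psi})$, where $\mathcal{F}$ collects the new-infection terms $\alpha x_i[y_0+(1-p_I)(y_1+y_2)]/n$ (carrying the extra factor $(1-p_S)$ for $i=2$) and $\mathcal{V}$ collects the outflow terms $(r+\mu)y_i$. One then checks the standard sign and nonnegativity conditions (A1--A5 of \cite{van2002reproduction}): $\mathcal{F}$ is entrywise nonnegative, $\mathcal{V}$ has the required M-matrix structure, and the infection-free subsystem is itself stable.

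Next I would linearize at $\mathbf{\Psi}^*$, using $x_i/n\to\varepsilon_i$ from Theorem~\ref{thm1}, to obtain the reduced matrices $F=D\mathcal{F}|_{\mathbf{\Psi}^*}$ and $V=D\mathcal{V}|_{\mathbf{\Psi}^*}=(r+\mu)I$. Since $V$ is a nonsingular M-matrix and $F\ge 0$, the key lemma of \cite{van2002reproduction} supplies the equivalence $s(F-V)<0\iff\rho(FV^{-1})<1$, where $s(\cdot)$ denotes the spectral abscissa and $\rho(FV^{-1})=\mathcal{R}_0$ by construction. Note that $F$ here is rank one, so its single nonzero eigenvalue is its trace, recovering $\mathcal{R}_0$ in the form of Eq.~(\ref{r0}).

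The final step, which I expect to be the main obstacle, is to confirm that the remaining (susceptible and recovered) coordinates do not spoil the conclusion. The claim is that the full Jacobian at $\mathbf{\Psi}^*$ is block triangular: the derivatives of $\dot{y}_i$ with respect to $x_j$ and $z_j$ all vanish at $\mathbf{\Psi}^*$ because they are proportional to $[y_0+(1-p_I)(y_1+y_2)]=0$ there, so the infected block is self-contained and equals $F-V$, while the complementary block is diagonal with entries $-\mu$ (from the $x_i$ and $z_i$ equations). Consequently the full spectrum is the union of the eigenvalues of $F-V$ and the value $-\mu$ repeated, and local asymptotic stability holds precisely when $s(F-V)<0$, i.e.\ when $\mathcal{R}_0<1$; when $\mathcal{R}_0>1$ the matrix $F-V$ has an eigenvalue with positive real part, rendering $\mathbf{\Psi}^*$ unstable.
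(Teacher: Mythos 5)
Your proposal is correct and takes essentially the same approach as the paper: the paper gives no independent argument for Theorem~\ref{thm2}, deferring entirely to Theorem~2 of \cite{van2002reproduction}, and its Appendix~\ref{reprodu} sets up exactly your decomposition, with $\mathcal{F}$ collecting the new-infection terms, $V=(r+\mu)I$, and $\mathcal{R}_0=\rho(FV^{-1})$ computed from the rank-one matrix $F$. Your added verifications---the hypotheses (A1)--(A5), the vanishing of $\partial\dot{y}_i/\partial x_j$ and $\partial\dot{y}_i/\partial z_j$ at $\mathbf{\Psi}^*$, and the resulting block-triangular Jacobian with complementary block $-\mu I$---are precisely the details that the cited theorem encapsulates, so you have simply unpacked, correctly, what the paper leaves to the reference.
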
 
		
		Substituting the parameters in Figure~\ref{time1} into Eq.~(\ref{r0}), we can calculate $\mathcal{R}_0=0.9167<1$, which means the epidemic-free equilibrium is stable, consistent with that shown in Figure~\ref{time1}. Similarly, substituting parameters in Figure~\ref{time2} produces  $\mathcal{R}_0=2.1167>1$, such that the epidemic-free equilibrium is not stable, which is also consistent with that shown in Figure~\ref{time2}.
		
		\subsection{Epidemic-free and endemic equilibria}\label{globalsec}
		We separate $\mathbf{\Psi}$ into 
		\begin{equation}\label{sep}
			\left\{\begin{aligned}
				\mathbf{\Phi}_1&=\left(x_0,y_0,x_1,y_1,x_2,y_2\right)^\mathrm{T},\\
				\mathbf{\Phi}_2&=\left(z_0,z_1,z_2\right)^\mathrm{T}.
			\end{aligned}\right.
		\end{equation}
		
		From Eq.~(\ref{system}), we can assert that the steady-state of $\mathbf{\Phi}_1$ can determine the steady-state of $\mathbf{\Phi}_2$, and $\mathbf{\Phi}_2$ does not affect the evolution of $\mathbf{\Phi}_1$. Therefore, the stability of $\mathbf{\Psi}$ is equivalent to (i) the stability of $\mathbf{\Phi}_1$ and (ii) the stability of $\mathbf{\Phi}_2$ when $\mathbf{\Phi}_1$ achieves stability. 
		
		We first prove the global stability of the epidemic-free equilibrium by constructing the Lyapunov function.
		
		\begin{thm}\label{thm3}
			The epidemic-free equilibrium $\mathbf{\Psi}^*$ is global asymptotically stable in $\mathbb{R}_{\geq 0}^9$ if $\mathcal{R}_0<1$.
		\end{thm}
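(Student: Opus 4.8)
The plan is to prove global attraction by a Lyapunov argument concentrated on the infected compartments, then propagate the resulting convergence to the remaining variables. First I would invoke Theorem~\ref{thm1}: since the total $n$ and each group total $w_i := x_i + y_i + z_i$ obey the decoupled linear equations $\dot n = \Lambda - \mu n$ and $\dot w_i = \varepsilon_i \Lambda - \mu w_i$, they converge exponentially to $n^* = \Lambda/\mu$ and $\varepsilon_i n^*$. Hence the region $\Omega = \{\,\mathbf\Psi \in \mathbb R_{\ge 0}^9 : x_i + y_i + z_i \le \varepsilon_i n^*,\ i=0,1,2\,\}$ is forward invariant and globally attracting, and on $\Omega$ one may substitute $n = n^*$ and use the bounds $x_i \le \varepsilon_i n^*$. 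As already observed in Eq.~(\ref{sep}), the recovered block $\mathbf\Phi_2$ does not feed back into $\mathbf\Phi_1$, so it suffices to drive the infected and susceptible variables to their epidemic-free values and then read off $\mathbf\Phi_2$.

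The Lyapunov function I would use is $V = y_0 + (1-p_I)(y_1 + y_2)$, whose weights are exactly the left Perron eigenvector of the next-generation matrix $FV^{-1}$ from Appendix~\ref{reprodu}. That matrix is rank one: its right factor is $(\varepsilon_0,\varepsilon_1,(1-p_S)\varepsilon_2)^\mathrm{T}$ and its left factor is $(1,1-p_I,1-p_I)$, so the latter supplies the weights. Writing the common force of infection as $\lambda = \alpha[y_0 + (1-p_I)(y_1+y_2)]/n = \alpha V/n$, a direct differentiation along Eq.~(\ref{system}) collapses the three infected equations into
\begin{equation*}
\dot V = V\left\{\frac{\alpha}{n}\bigl[x_0 + (1-p_I)x_1 + (1-p_I)(1-p_S)x_2\bigr] - (r+\mu)\right\}.
\end{equation*}
On $\Omega$, using $n=n^*$ and $x_i \le \varepsilon_i n^*$, the bracket is bounded above by $\alpha[\varepsilon_0 + (1-p_I)\varepsilon_1 + (1-p_I)(1-p_S)\varepsilon_2] - (r+\mu) = (r+\mu)(\mathcal R_0 - 1)$, so that $\dot V \le (r+\mu)(\mathcal R_0 - 1)\,V \le 0$ whenever $\mathcal R_0 < 1$, with equality only on $\{V=0\}$.

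I would then close the argument with LaSalle's invariance principle: every trajectory approaches the largest invariant subset of $\{\dot V = 0\} = \{y_0 = y_1 = y_2 = 0\}$, where I use $0 < p_I < 1$ to guarantee all three weights are strictly positive. On that set $\lambda \equiv 0$, so the recovered equations $\dot z_i = r y_i - \mu z_i$ force $z_i \to 0$, while the susceptible equations reduce to $\dot x_i = \varepsilon_i \Lambda - \mu x_i$, forcing $x_i \to \varepsilon_i n^*$. Thus the only invariant set inside $\{\dot V=0\}$ is the singleton $\mathbf\Psi^*$, which yields global attraction; combined with the local asymptotic stability already furnished by Theorem~\ref{thm2} under $\mathcal R_0 < 1$, this gives global asymptotic stability.

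The step I expect to be the main obstacle is the interplay between the time-varying total $n(t)$ and the coefficient $x_0 + (1-p_I)x_1 + (1-p_I)(1-p_S)x_2$ appearing in $\dot V$: the clean estimate $\dot V \le (r+\mu)(\mathcal R_0-1)V$ is exact only where $n = n^*$ and $x_i \le \varepsilon_i n^*$ simultaneously hold, i.e.\ on the attracting region $\Omega$. Making this legitimate rather than merely heuristic is precisely what the preliminary reduction via Theorem~\ref{thm1} accomplishes; for full rigor on trajectories starting outside $\Omega$ one would supplement it with a standard asymptotically-autonomous or $\omega$-limit-set argument, noting that $\Omega$ absorbs every orbit in finite-or-asymptotic time.
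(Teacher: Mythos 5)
Your argument is correct in substance, but it takes a genuinely different route from the paper's. The paper's Lyapunov function, Eq.~(\ref{L1}), augments your linear form $V=y_0+(1-p_I)(y_1+y_2)$ with quadratic terms $(x_i-x_i^*)^2/(2x_i^*)$ carrying the same weights $1$, $1-p_I$, $1-p_I$; after substituting $n=n^*$ (licensed informally by Theorem~\ref{thm1}), the cross terms generated by those quadratics absorb the susceptible deviations exactly, leaving manifestly nonpositive squares plus $(r+\mu)[y_0+(1-p_I)(y_1+y_2)](\mathcal{R}_0-1)$ in Appendix~\ref{L1condi}, so $\dot{\mathcal{L}}\leq 0$ holds on all of $\mathbb{R}_{\geq 0}^6$ with no confinement to a compact region. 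Your version keeps the Lyapunov function linear, with the left Perron weights of the next-generation matrix (a correct observation), and pays for it with the preliminary reduction to $\Omega$; this buys a one-line derivative computation in which the role of $\mathcal{R}_0$ is immediate, at the cost of making the reduction load-bearing. Two points in your writeup need tightening. First, on $\Omega$ as you define it (inequalities $x_i+y_i+z_i\leq\varepsilon_i n^*$) one only has $n\leq n^*$, so $\alpha x_i/n$ is \emph{not} bounded by $\alpha\varepsilon_i$ there; the clean estimate holds on the smaller set $\{x_i+y_i+z_i=\varepsilon_i n^*,\ i=0,1,2\}$, which is where every $\omega$-limit set lies, so the asymptotically autonomous argument you defer to a closing remark is essential rather than optional (the paper shares this gap, since it silently replaces $n$ by $n^*$ throughout Appendix~\ref{L1condi}). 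Second, the largest invariant subset of $\{\dot V=0\}$ is the singleton $\{\mathbf{\Psi}^*\}$ only in the two-sided sense within the compact region---backward orbits with $z_i>0$ or $x_i\neq\varepsilon_i n^*$ escape it---so your forward-convergence phrasing (``forcing $x_i\to\varepsilon_i n^*$'') should be recast as that backward-escape argument to invoke LaSalle correctly. With those repairs the two proofs close identically: LaSalle's invariance principle for $\mathbf{\Phi}_1$, then the recovered block by $z$-decay (the paper uses $\mathcal{L}_z=z_0+z_1+z_2$).
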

		\begin{proof}
			Consider the Lyapunov function $\mathcal{L}(\mathbf{\Phi}_1)$ in $\mathbb{R}_{\geq 0}^6$,
			\begin{align}\label{L1}
				\mathcal{L}(\mathbf{\Phi}_1)=
				&~\frac{(x_0-x_0^*)^2}{2x_0^*}+y_0\nonumber \\
				&+(1-p_I)\left[\frac{(x_1-x_1^*)^2}{2x_1^*}+y_1\right]\nonumber \\
				&+(1-p_I)\left[\frac{(x_2-x_2^*)^2}{2x_2^*}+y_2\right].
			\end{align}
			We can conclude that, (i) $\mathcal{L}(\mathbf{\Phi}_1)=0$ when $\mathbf{\Phi}_1=\mathbf{\Phi}_1^*$, (ii) $\mathcal{L}(\mathbf{\Phi}_1)>0$ when $\mathbf{\Phi}_1\neq \mathbf{\Phi}_1^*$. Therefore, $\mathbf{\Phi}_1$ is positive definite in the neighborhood of $\mathbf{\Phi}_1^*$. Secondly, we have (see Appendix~\ref{L1condi}),
			\begin{equation}\label{L1'}
				\dot{\mathcal{L}}(\mathbf{\Phi}_1)\leq 0
			\end{equation}
			when $\mathbf{\Phi}_1\neq \mathbf{\Phi}_1^*$. Note that $\dot{\mathcal{L}}(\mathbf{\Phi}_1)=0$ can be confirmed by Eq.~(\ref{L1calcu}) when $\mathbf{\Phi}_1=\mathbf{\Phi}_1^*$. Therefore, $\mathbf{\Phi}_1$ is negative semi-definite in the neighborhood of $\mathbf{\Phi}_1^*$. Hence, according to Lasalle’s Invariance Principle \cite{la1976stability}, $\mathbf{\Phi}_1^*$ is globally asymptotically stable in $\mathbb{R}_{\geq 0}^6$. Given $\mathbf{\Phi}_1^*$ stable, it is easy to prove the global asymptotic stability of $\mathbf{\Phi}_2^*$ in $\mathbb{R}_{\geq 0}^3$ by constructing Lyapunov function $\mathcal{L}_z(\mathbf{\Phi}_2)=z_0+z_1+z_2$. Therefore, $\mathbf{\Psi}^*$ is global asymptotically stable in $\mathbb{R}_{\geq 0}^9$. 
		\end{proof}
		
		The equation $\dot{\mathbf{\Psi}}=\mathbf{0}$ has two solutions. We denote the second solution by $\mathbf{\Psi}^{**}$. The infected population is non-zero; thus, we call it the endemic equilibrium. This equilibrium $\mathbf{\Psi}^{**}$ corresponds to the results shown in Figure~\ref{time2}. It is not easy to express it analytically, but we can still show its existence condition.
		
		\begin{thm}\label{thm4}
			The endemic equilibrium $\mathbf{\Psi}^{**}$ exists and is unique in $\mathbb{R}_{\geq 0}^9$ if $\mathcal{R}_0>1$.
		\end{thm}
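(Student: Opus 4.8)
The plan is to reduce the nine-dimensional fixed-point problem $\dot{\mathbf{\Psi}}=\mathbf{0}$ to a single scalar equation in the infection pressure and then settle existence and uniqueness by a monotonicity argument. By Theorem~\ref{thm1} I may set $n=n^*=\Lambda/\mu$ throughout, and it suffices to solve for the six components of $\mathbf{\Phi}_1$, since the relations $\dot{z}_i=ry_i-\mu z_i=0$ then determine $z_i=ry_i/\mu$ uniquely and non-negatively. Introduce the scalar force of infection $\beta=\alpha[y_0+(1-p_I)(y_1+y_2)]/n^*$, which is precisely the coefficient appearing in every infection term (the always-masked group $x_2$ carrying the extra factor $1-p_S$).

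First I would solve each pair $(\dot{x}_i,\dot{y}_i)=\mathbf{0}$ with $\beta$ held fixed; this is linear and gives, for $i=0,1$,
\[
x_i=\frac{\varepsilon_i\Lambda}{\beta+\mu},\qquad y_i=\frac{\beta\,\varepsilon_i\Lambda}{(\beta+\mu)(r+\mu)},
\]
and for the always-masked group
\[
x_2=\frac{\varepsilon_2\Lambda}{(1-p_S)\beta+\mu},\qquad y_2=\frac{(1-p_S)\beta\,\varepsilon_2\Lambda}{\big((1-p_S)\beta+\mu\big)(r+\mu)} .
\]
These are non-negative whenever $\beta\ge 0$. Substituting back into the definition of $\beta$ and cancelling the common factor $\beta$ (legitimate exactly because the endemic equilibrium requires $\beta\neq 0$) collapses the entire system into the single self-consistency equation
\[
\frac{r+\mu}{\alpha\mu}=\underbrace{\frac{\varepsilon_0+(1-p_I)\varepsilon_1}{\beta+\mu}+\frac{(1-p_I)(1-p_S)\varepsilon_2}{(1-p_S)\beta+\mu}}_{=:F(\beta)} .
\]

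Next I would analyze $F$ on $[0,\infty)$. Each summand has the form $c/(a\beta+\mu)$ with $a,c>0$, so $F$ is continuous, strictly decreasing, and tends to $0$ as $\beta\to\infty$; hence $F$ maps $[0,\infty)$ bijectively onto $(0,F(0)]$. The crucial observation, which links the problem to Eq.~(\ref{r0}), is that $F(0)=\tfrac{1}{\mu}\big[\varepsilon_0+(1-p_I)\varepsilon_1+(1-p_I)(1-p_S)\varepsilon_2\big]=\mathcal{R}_0\cdot\tfrac{r+\mu}{\alpha\mu}$. Therefore the target level $\tfrac{r+\mu}{\alpha\mu}$ lies strictly below $F(0)$ exactly when $\mathcal{R}_0>1$, and by strict monotonicity the equation $F(\beta)=\tfrac{r+\mu}{\alpha\mu}$ then has a unique root $\beta^{**}>0$ (and no positive root when $\mathcal{R}_0\le 1$). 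Back-substituting $\beta^{**}$ into the formulas above yields the unique positive values of $x_i,y_i$, and then of $z_i$, producing a single endemic equilibrium $\mathbf{\Psi}^{**}\in\mathbb{R}_{\geq 0}^9$.

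The argument is essentially self-contained once the reduction is made, so the only real obstacle is the bookkeeping of the reduction itself: one must check that freezing $\beta$ genuinely decouples the six equations and that eliminating $x_i,y_i$ reproduces $\beta$ consistently rather than introducing spurious constraints. I expect the identity $F(0)=\mathcal{R}_0(r+\mu)/(\alpha\mu)$ to be the linchpin, since it converts the abstract threshold $\mathcal{R}_0>1$ into the geometric statement that a strictly decreasing curve crosses a fixed positive level exactly once.
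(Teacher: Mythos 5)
Your proof is correct, and while it rests on the same reduction as the paper's, it settles the scalar problem by a genuinely different and cleaner mechanism. The paper (Appendix~\ref{endeexist}) also collapses the system to one scalar unknown---its $Y=y_0^{**}+(1-p_I)(y_1^{**}+y_2^{**})$ is exactly your $n^*\beta/\alpha$, and its self-consistency relation Eq.~(\ref{c6}) is your $F(\beta)=(r+\mu)/(\alpha\mu)$ after that change of variable---but it then clears denominators to obtain the quadratic Eq.~(\ref{c7}) and invokes Vieta's theorem: for $\mathcal{R}_0>1$ the product of the roots is negative, so exactly one root is positive. Moreover, positivity of the individual $y_i^{**}$ costs the paper a separate contradiction argument via Eqs.~(\ref{c4})--(\ref{c5}), because solving for the $y_i^{**}$ in terms of $Y$ leaves the signs of the denominators a priori undetermined. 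Your parametrization by the force of infection $\beta\geq 0$ sidesteps all of that sign-chasing: every component is manifestly nonnegative from the outset, and strict monotonicity of $F$ together with the linchpin identity $F(0)=\mathcal{R}_0(r+\mu)/(\alpha\mu)$ delivers existence, uniqueness, and the sharp threshold in one stroke (your observation that no positive root exists for $\mathcal{R}_0\leq 1$ matches the paper's ``both roots negative'' case of Vieta). The one point worth making fully explicit is the step you gesture at: in $\mathbb{R}_{\geq 0}^9$ all $y_i\geq 0$ forces $\beta\geq 0$, so at an endemic equilibrium $\beta\neq 0$ means $\beta>0$, and hence every endemic equilibrium in the orthant is captured by your reduction---this is what makes the cancellation of $\beta$ and the restriction of $F$ to $[0,\infty)$ exhaustive rather than merely sufficient, and it is also why the uniqueness of $\beta^{**}$ transfers to uniqueness of $\mathbf{\Psi}^{**}$, since all nine components are determined by $\beta^{**}$.
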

		\begin{proof}
			First, we show the relationship between the existence and uniqueness of positive $y_i^{**}$, $i=0,1,2$ ($y_0^{**}>0$, $y_1^{**}>0$, $y_2^{**}>0$) and $\mathcal{R}_0>1$ (see Appendix~\ref{endeexist}). Then, the existence and uniqueness of $x_i^{**}$, $z_i^{**}$, $i=0,1,2$ can be naturally confirmed, hence the existence and uniqueness of $\mathbf{\Psi}^{**}$.
		\end{proof}
		
		Theorem~\ref{thm3} validates that in Figure~\ref{time1}, the steady-state with the same parameters is independent of the initial conditions. Theorem~\ref{thm4} is consistent with Figure~\ref{time2}.
		
		\subsection{Robustness analysis for the effect of wearing masks}\label{robustsec}
		The basic reproduction number measures the average number of individuals that an infected individual can transmit the epidemic. The higher the basic reproduction number, the more severe the epidemic.
		
		Analyzing Eq.~(\ref{r0}), we can see that the coefficient before $\varepsilon_0$ is 1, and $(1-p_I)$ for $\varepsilon_1$, and $(1-p_S)(1-p_I)$ for $\varepsilon_2$. Since $p_S>0$, $p_I>0$, we have $(1-p_S)(1-p_I)<1-p_I<1$. Considering the constraints: $0\leq \varepsilon_0\leq 1$, $0\leq \varepsilon_1\leq 1$, $0\leq \varepsilon_2\leq 1$, $\varepsilon_0+\varepsilon_1+\varepsilon_2=1$, we know the following facts.
		(i) $\mathcal{R}_0$ takes the minimum when $\varepsilon_0=0$, $\varepsilon_1=0$, $\varepsilon_2=1$. 
		(ii) $\mathcal{R}_0$ takes the maximum when $\varepsilon_0=1$, $\varepsilon_1=0$, $\varepsilon_2=0$. 
		Therefore, everyone always wearing a mask minimizes the epidemic severity, while no one wearing masks maximizes the epidemic severity.

		\subsection{Application to mask design}\label{masksec}
		The basic reproduction number $\mathcal{R}_0$ can play the same role as the infected fraction $p_y$ in measuring the outbreak severity. From Eq.~(\ref{r0}), we see that the coefficient $(1-p_I)$ acts on both $\varepsilon_1$ and $\varepsilon_2$ while $(1-p_S)$ only acts on $\varepsilon_2$, which means the mask protection from the infected side acts on two population categories and that from the susceptible side acts on only one. This brings us a misleading intuition that increasing the protective effect from the infected side is always more conducive. However, we can reveal a hidden different reality by group dynamics.
		
		We write the partial derivatives of $\mathcal{R}_0$ with respect to $p_I$ and $p_S$ in Eq.~(\ref{r0topi}) and (\ref{r0tops}).
		\begin{equation}\label{r0topi}
			\frac{\partial \mathcal{R}_0}{\partial p_I}=-\frac{\alpha}{r+\mu}[\varepsilon_1+(1-p_S)\varepsilon_2],
		\end{equation}
		\begin{equation}\label{r0tops}
			\frac{\partial \mathcal{R}_0}{\partial p_S}=-\frac{\alpha}{r+\mu}(1-p_I)\varepsilon_2.
		\end{equation}
		
		Increasing the protective effect from the infected side is better than that of the susceptible one means
		\begin{equation}\label{infebettercondi}
			\frac{\partial \mathcal{R}_0}{\partial p_I}<\frac{\partial \mathcal{R}_0}{\partial p_S}
		\end{equation}
		or
		\begin{equation}\label{infebetter}
			\frac{\varepsilon_1}{\varepsilon_2}>p_S-p_I,
		\end{equation}
		and, increasing the protective effect from the susceptible side is better than that of the infected one means
		\begin{equation}\label{susbettercondi}
			\frac{\partial \mathcal{R}_0}{\partial p_S}<\frac{\partial \mathcal{R}_0}{\partial p_I}
		\end{equation}
		or
		\begin{equation}\label{susbetter}
			\frac{\varepsilon_1}{\varepsilon_2}<p_S-p_I,
		\end{equation}
		
		We can discuss the parameter space in two cases. First, if $p_S<p_I$, then Eq.~(\ref{infebettercondi}) always holds. Second, if $p_S>p_I$, then Eq.~(\ref{susbettercondi}) does not always hold. This suggests that the ``intuitive" phenomena (increasing  $p_I$ is more effective) occupy more parameter space, which can be verified by Figure~\ref{maskfig}(c).
		
		For the latter case, $p_S>p_I$, we can transform Eq.~(\ref{susbetter}) into $\varepsilon_2>\varepsilon_1/(p_S-p_I)$. This indicates that if the fraction of individuals always wearing a mask ($\varepsilon_2$) exceeds a critical point, $\varepsilon_1/(p_S-p_I)$, then, increasing $p_S$ is more effective than increasing $p_I$, even if $p_I$ can act on both category $\varepsilon_1$ and $\varepsilon_2$. We can interpret this result in daily language by taking into account the fraction of existing infected individuals. As we showed in Section~\ref{robustsec}, $\mathcal{R}_0$ decreases (i.e., infected individuals increasing) with an increase in $\varepsilon_2$. In this way, the critical point of $\varepsilon_2$ is rational to exist, over which the infected individuals are too few to exert the protective effect that the mask produces on their side. 
		
		\section{Conclusion}\label{conclusion}
		Although most masks have little to no effect on personal protection \cite{peeples2021face}, we are still interested in the protective effects of masks on a population. We proposed a general epidemic model in the classic SIR framework considering three different preferences towards wearing masks. Some individuals never wear masks; others wear masks if and only if infected, and some always wear masks. We started from agent-based rules and used a set of mean-field differential equations to approximate the model. The results of the two corroborate each other. In this work, the three preferences are independent of each other. 
		
		The first aspect is the effect of masks on epidemics. The ternary heat maps revealed that wearing masks can reduce the number of infected individuals and increase the number of susceptible individuals. We provided the global stability analysis of the results and showed the robustness of the effectiveness of masks by analyzing the basic reproduction number of the epidemic. We concluded that wearing masks are beneficial to the control of epidemics. 
		
		The second aspect is the application of the epidemic model to mask design. The protective effect from the infected side ($p_I$) can be understood as the filterability of the mask from the face to the outside against viruses, while the protective effect to the susceptible side ($p_S$) can be interpreted as the filterability from the outside to the face. This can be influenced by the material and design of the mask \cite{boraey2021analytical}, and we analyzed which side strengthening would provide better results. We showed that strengthening the infected side is more effective in most parameter spaces. This is intuitive since strengthening the infected side acts on two categories of individuals (those wearing masks only if infected and those always wearing masks), while strengthening the susceptible side acts on only one category (those always wearing masks). However, there is a hidden reality from the perspective of group dynamics. We found that once the fraction of individuals always wearing masks exceeds a critical point, $\varepsilon_2>\varepsilon_1/(p_S-p_I)$, then, strengthening the susceptible side becomes more effective. This is because the preference of always wearing masks reduces the infected fraction in the population, so that the infected individuals are too few to exert the protective effect of masks produced on their side. In the daily language, both the cases above seem to make sense. However, noticing the latter case from the group perspective and further giving the mask design strategies according to parameter spaces are not straightforward without the help of system dynamics.
		
		Real-world situations may have more complexity and different insights. For instance, the underlying assumptions---people's preferences do not change with time, ignores human subjectivity, which has the potential to reveal more insights. In fact, people can change their preference on whether to wear masks by either estimating the epidemic severity (evolutionary games) or being affected by the propaganda of the effectiveness of masks (opinion dynamics). In this way, future work may consider time-dependent preferences, and apply any modified model to mask design. 
		
		\section*{Acknowledgement}\label{ackno}
		Publication of this article was funded in part by the George Mason University Libraries Open Access Publishing Fund.
		
		\section*{Data availability}\label{dataavai}
		Data sharing not applicable to this article as no datasets were generated or analyzed during the current study.
		
		\section*{Conflict of interest statement}\label{confli}
		On behalf of all authors, the corresponding author states that there is no conflict of interest.
	\end{multicols}
	
	\begin{appendices}
		\section{Finding the basic reproduction number}\label{reprodu}
		\setcounter{equation}{0}
		\renewcommand{\theequation}{A.\arabic{equation}}
		Let us decompose the infected compartments in Eq.~(\ref{system}) as $\left(\dot{y}_0,\dot{y}_1,\dot{y}_2\right)^\mathrm{T}=\mathcal{F}-\mathcal{V}$, where
		\begin{equation}
			\mathcal{F}=
			\begin{pmatrix}
				\mathcal{F}_{y_0} \\
				\mathcal{F}_{y_1} \\
				\mathcal{F}_{y_0}
			\end{pmatrix}
			=
			\begin{pmatrix}
				\alpha x_0 [y_0+(1-p_I)(y_1+y_2)]/n^* \\
				\alpha x_1 [y_0+(1-p_I)(y_1+y_2)]/n^* \\
				\alpha (1-p_S)x_2 [y_0+(1-p_I)(y_1+y_2)]/n^*
			\end{pmatrix},
			\label{F}
		\end{equation}
		\begin{equation}
			\mathcal{V}=
			\begin{pmatrix}
				\mathcal{V}_{y_0} \\
				\mathcal{V}_{y_1} \\
				\mathcal{V}_{y_0}
			\end{pmatrix}
			=
			\begin{pmatrix}
				ry_0+\mu y_0 \\
				ry_1+\mu y_1 \\
				ry_2+\mu y_2
			\end{pmatrix}.
			\label{V}
		\end{equation}
		
		Solve for the Jacobian matrix of $\mathcal{F}$ and $\mathcal{V}$ at $\mathbf{\Psi}^*$, denoted by $\mathbf{F}$ and $\mathbf{V}$, 
		\begin{align}
			\mathbf{F}&=
			\begin{pmatrix}
				\displaystyle \frac{\partial\mathcal{F}_{y_0}}{\partial y_0} & 
				\displaystyle \frac{\partial\mathcal{F}_{y_0}}{\partial y_1} & 
				\displaystyle \frac{\partial\mathcal{F}_{y_0}}{\partial y_2} \\[8pt]
				\displaystyle \frac{\partial\mathcal{F}_{y_1}}{\partial y_0} &
				\displaystyle \frac{\partial\mathcal{F}_{y_1}}{\partial y_1} &
				\displaystyle \frac{\partial\mathcal{F}_{y_1}}{\partial y_2} \\[8pt]
				\displaystyle \frac{\partial\mathcal{F}_{y_2}}{\partial y_0} &
				\displaystyle \frac{\partial\mathcal{F}_{y_2}}{\partial y_1} &
				\displaystyle \frac{\partial\mathcal{F}_{y_2}}{\partial y_2}
			\end{pmatrix}
			(\mathbf{\Psi}^*)\nonumber
			=\frac{1}{n^*}
			\begin{pmatrix}
				\displaystyle \alpha x_0 & 
				\displaystyle \alpha (1-p_I)x_0 & 
				\displaystyle \alpha (1-p_I)x_0 \\
				\displaystyle \alpha x_1 & 
				\displaystyle \alpha (1-p_I)x_1 & 
				\displaystyle \alpha (1-p_I)x_1 \\
				\displaystyle \alpha (1-p_S)x_2 & 
				\displaystyle \alpha (1-p_S)(1-p_I)x_2 & 
				\displaystyle \alpha (1-p_S)(1-p_I)x_2
			\end{pmatrix} \\
			&=\alpha
			\begin{pmatrix}
				\displaystyle x_0 & 
				\displaystyle (1-p_I)x_0 & 
				\displaystyle (1-p_I)x_0 \\
				\displaystyle x_1 & 
				\displaystyle (1-p_I)x_1 & 
				\displaystyle (1-p_I)x_1 \\
				\displaystyle (1-p_S)x_2 & 
				\displaystyle (1-p_S)(1-p_I)x_2 & 
				\displaystyle (1-p_S)(1-p_I)x_2
			\end{pmatrix},
			\label{Fmatri}
		\end{align}
		\begin{equation}
			\mathbf{V}=
			\begin{pmatrix}
				\displaystyle \frac{\partial\mathcal{V}_{y_0}}{\partial y_0} & 
				\displaystyle \frac{\partial\mathcal{V}_{y_0}}{\partial y_1} & 
				\displaystyle \frac{\partial\mathcal{V}_{y_0}}{\partial y_2} \\[8pt]
				\displaystyle \frac{\partial\mathcal{V}_{y_1}}{\partial y_0} &
				\displaystyle \frac{\partial\mathcal{V}_{y_1}}{\partial y_1} &
				\displaystyle \frac{\partial\mathcal{V}_{y_1}}{\partial y_2} \\[8pt]
				\displaystyle \frac{\partial\mathcal{V}_{y_2}}{\partial y_0} &
				\displaystyle \frac{\partial\mathcal{V}_{y_2}}{\partial y_1} &
				\displaystyle \frac{\partial\mathcal{V}_{y_2}}{\partial y_2}
			\end{pmatrix}
			(\mathbf{\Psi}^*)
			=(r+\mu)
			\begin{pmatrix}
				\displaystyle 1 & 
				\displaystyle 0 & 
				\displaystyle 0 \\
				\displaystyle 0 & 
				\displaystyle 1 & 
				\displaystyle 0 \\
				\displaystyle 0 & 
				\displaystyle 0 & 
				\displaystyle 1
			\end{pmatrix}.
			\label{Vmatri}
		\end{equation}
		
		Then, the spectral radius (i.e., maximum eigenvalue) of $\mathbf{F}\cdot \mathbf{V}^{-1}$ is the basic reproduction number $\mathcal{R}_0$, 
		\begin{equation}
			\mathcal{R}_0=\frac{\alpha}{r+\mu}[\varepsilon_0+(1-p_I) \varepsilon_1+(1-p_S)(1-p_I) \varepsilon_2].
		\end{equation}
		Please see Ref.~\cite{van2002reproduction} for more information on how to find the basic reproduction number.
		
		\section{Proof of $\dot{\mathcal{L}}(\mathbf{\Phi}_1)\leq 0$ 	when $\mathbf{\Phi}_1\neq \mathbf{\Phi}_1^*$}\label{L1condi}
		\setcounter{equation}{0}
		\renewcommand{\theequation}{B.\arabic{equation}}
		\begin{align}\label{L1calcu}
			\dot{\mathcal{L}}(\mathbf{\Phi}_1)
			=&\left(\frac{x_0}{x_0^*}-1\right)\dot{x}_0+\dot{y}_0
			+(1-p_I)\left[\left(\frac{x_1}{x_1^*}-1\right)\dot{x}_1+\dot{y}_1\right]
			+(1-p_I)\left[\left(\frac{x_2}{x_2^*}-1\right)\dot{x}_2+\dot{y}_2\right] \nonumber \\
			=&\left(\frac{x_0}{x_0^*}-1\right)\left(\varepsilon_0 \Lambda-\frac{\alpha x_0 [y_0+(1-p_I)(y_1+y_2)]}{n^*} -\mu x_0\right)+\frac{\alpha x_0 [y_0+(1-p_I)(y_1+y_2)]}{n^*} \nonumber \\
			&-ry_0-\mu y_0+(1-p_I)\left(\frac{x_1}{x_1^*}-1\right)\left(\varepsilon_1 \Lambda-\frac{\alpha x_1 [y_0+(1-p_I)(y_1+y_2)]}{n^*} -\mu x_1\right)\nonumber \\
			&+(1-p_I)\left(\frac{\alpha x_1 [y_0+(1-p_I)(y_1+y_2)]}{n^*}-ry_1-\mu y_1\right)\nonumber \\
			&+(1-p_I)\left(\frac{x_2}{x_2^*}-1\right)\left(\varepsilon_2 \Lambda-\frac{\alpha (1-p_S)x_2 [y_0+(1-p_I)(y_1+y_2)]}{n^*} -\mu x_2\right)\nonumber \\
			&+(1-p_I)\left(\frac{\alpha (1-p_S)x_2 [y_0+(1-p_I)(y_1+y_2)]}{n^*}-ry_2-\mu y_2\right)\nonumber \\
			=&-\frac{\mu}{x_0^*}(x_0-x_0^*)^2-\frac{\alpha}{x_0^* n^*}[y_0+(1-p_I)(y_1+y_2)](x_0-x_0^*)^2\nonumber \\
			&+(r+\mu)\left(\frac{\alpha x_0^*}{r+\mu}\times \frac{y_0+(1-p_I)(y_1+y_2)}{n^*}-y_0\right)\nonumber \\
			&-\frac{\mu}{x_1^*}(1-p_I)(x_1-x_1^*)^2-\frac{\alpha}{x_1^* n^*}(1-p_I)[y_0+(1-p_I)(y_1+y_2)](x_1-x_1^*)^2\nonumber \\
			&+(r+\mu)(1-p_I)\left(\frac{\alpha x_1^*}{r+\mu}\times \frac{y_0+(1-p_I)(y_1+y_2)}{n^*}-y_1\right)\nonumber \\
			&-\frac{\mu}{x_2^*}(1-p_I)(x_2-x_2^*)^2-\frac{\alpha}{x_2^* n^*}(1-p_S)(1-p_I)[y_0+(1-p_I)(y_1+y_2)](x_2-x_2^*)^2\nonumber \\
			&+(r+\mu)(1-p_I)\left(\frac{\alpha (1-p_S)x_2^*}{r+\mu}\times \frac{y_0+(1-p_I)(y_1+y_2)}{n^*}-y_2\right).
		\end{align}
		In Eq.~(\ref{L1calcu}), we used $x_0^*=\varepsilon_0\Lambda/\mu$, $x_1^*=\varepsilon_1\Lambda/\mu$, $x_2^*=\varepsilon_2\Lambda/\mu$.
		
		We can further deflate Eq.~(\ref{L1calcu}),
		\begin{align}\label{L1calcu2}
			\dot{\mathcal{L}}(\mathbf{\Phi}_1)
			=&-\frac{\mu}{x_0^*}(x_0-x_0^*)^2-\frac{\alpha}{x_0^* n^*}[y_0+(1-p_I)(y_1+y_2)](x_0-x_0^*)^2\nonumber \\
			&-\frac{\mu}{x_1^*}(1-p_I)(x_1-x_1^*)^2-\frac{\alpha}{x_1^* n^*}(1-p_I)[y_0+(1-p_I)(y_1+y_2)](x_1-x_1^*)^2\nonumber \\
			&-\frac{\mu}{x_2^*}(1-p_I)(x_2-x_2^*)^2-\frac{\alpha}{x_2^* n^*}(1-p_S)(1-p_I)[y_0+(1-p_I)(y_1+y_2)](x_2-x_2^*)^2\nonumber \\
			&+(r+\mu)[y_0+(1-p_I)(y_1+y_2)]\left\{\frac{\alpha}{r+\mu}[\varepsilon_0+(1-p_I)\varepsilon_1+(1-p_S)(1-p_I)\varepsilon_2]-1\right\}\nonumber \\
			=&-\frac{\mu}{x_0^*}(x_0-x_0^*)^2-\frac{\alpha}{x_0^* n^*}[y_0+(1-p_I)(y_1+y_2)](x_0-x_0^*)^2\nonumber \\
			&-\frac{\mu}{x_1^*}(1-p_I)(x_1-x_1^*)^2-\frac{\alpha}{x_1^* n^*}(1-p_I)[y_0+(1-p_I)(y_1+y_2)](x_1-x_1^*)^2\nonumber \\
			&-\frac{\mu}{x_2^*}(1-p_I)(x_2-x_2^*)^2-\frac{\alpha}{x_2^* n^*}(1-p_S)(1-p_I)[y_0+(1-p_I)(y_1+y_2)](x_2-x_2^*)^2\nonumber \\
			&+(r+\mu)[y_0+(1-p_I)(y_1+y_2)](\mathcal{R}_0-1)\nonumber \\
			\leq&~0,
		\end{align}
		which completes the proof.
		
		\section{The existence and uniqueness of $\mathbf{\Psi}^{**}$ when $\mathcal{R}_0>1$}\label{endeexist}
		\setcounter{equation}{0}
		\renewcommand{\theequation}{C.\arabic{equation}}
		Using the equations $\dot{y}_0=0$, $\dot{y}_1=0$, $\dot{y}_2=0$ in $\dot{\mathbf{\Psi}}=\mathbf{0}$ to obtain $x_i^{**}$ as a function of $y_i^{**}$, $i=0,1,2$. Then, substituting the results into the equations $\dot{x}_0=0$, $\dot{x}_1=0$, $\dot{x}_2=0$, 
		\begin{equation}\label{c1}
			\left\{\begin{aligned}
				0=&~\varepsilon_0 \Lambda-(r+\mu)y_0^{**}-\frac{\mu n^*(r+\mu)y_0^{**}}{\alpha [y_0^{**}+(1-p_I)(y_1^{**}+y_2^{**})]}, \\
				0=&~\varepsilon_1 \Lambda-(r+\mu)y_1^{**}-\frac{\mu n^*(r+\mu)y_1^{**}}{\alpha [y_0^{**}+(1-p_I)(y_1^{**}+y_2^{**})]}, \\
				0=&~\varepsilon_2 \Lambda-(r+\mu)y_2^{**}-\frac{\mu n^*(r+\mu)y_2^{**}}{\alpha (1-p_S)[y_0^{**}+(1-p_I)(y_1^{**}+y_2^{**})]}. \\
			\end{aligned}\right.
		\end{equation}
		
		In Eq.~(\ref{c1}), we multiply the first equation by $\alpha/(r+\mu)$, the second equation by $\alpha(1-p_I)/(r+\mu)$, and the third equation by $\alpha(1-p_S)(1-p_I)/(r+\mu)$:
		\begin{equation}\label{c2}
			\left\{\begin{aligned}
				0=&~\frac{\alpha}{r+\mu}\varepsilon_0 \Lambda-\alpha y_0^{**}-\frac{\mu n^*y_0^{**}}{y_0^{**}+(1-p_I)(y_1^{**}+y_2^{**})}, \\
				0=&~\frac{\alpha}{r+\mu}(1-p_I)\varepsilon_1 \Lambda-\alpha (1-p_I)y_1^{**}-\frac{\mu n^*(1-p_I)y_1^{**}}{y_0^{**}+(1-p_I)(y_1^{**}+y_2^{**})}, \\
				0=&~\frac{\alpha}{r+\mu}(1-p_S)(1-p_I)\varepsilon_2 \Lambda-\alpha(1-p_S)(1-p_I)y_2^{**}-\frac{\mu n^*(1-p_I)y_2^{**}}{y_0^{**}+(1-p_I)(y_1^{**}+y_2^{**})}. \\
			\end{aligned}\right.
		\end{equation}
		
		Summing up the three equations in Eq.~(\ref{c2}) and using $n^*=\Lambda/\mu$ (see Eq.~(\ref{nvalue})), we have
		\begin{equation} 		
			\mathcal{R}_0-\alpha[y_0^{**}+(1-p_I)y_1^{**}+(1-p_S)(1-p_I)y_2^{**}]-1=0.
		\end{equation}
		Therefore, to ensure $y_0^{**}+(1-p_I)y_1^{**}+(1-p_S)(1-p_I)y_2^{**}>0$, which is a necessary condition for $y_0^{**}>0$, $y_1^{**}>0$, $y_2^{**}>0$, we have $\mathcal{R}_0>1$. However, we have not yet proved that $\mathcal{R}_0>1$ is a sufficient condition for $y_0^{**}>0$, $y_1^{**}>0$, $y_2^{**}>0$.
		
		According to Eq.~(\ref{c2}), we can ensure $y_0^{**}>0$, $y_1^{**}>0$, $y_2^{**}>0$ if we can confirm $y_0^{**}+(1-p_I)(y_1^{**}+y_2^{**})>0$. We will try to illustrate the opposite case, $y_0^{**}+(1-p_I)(y_1^{**}+y_2^{**})<0$, cannot happen. Let us further write Eq.~(\ref{c2}) as
		\begin{equation}\label{c4}
			\left\{\begin{aligned}
				y_0^{**}=&~\dfrac{\dfrac{\alpha}{r+\mu}\varepsilon_0 \Lambda}{\alpha+\dfrac{\mu n^*}{y_0^{**}+(1-p_I)(y_1^{**}+y_2^{**})}}, \\
				y_1^{**}=&~\dfrac{\dfrac{\alpha}{r+\mu}\varepsilon_1 \Lambda}{\alpha+\dfrac{\mu n^*}{y_0^{**}+(1-p_I)(y_1^{**}+y_2^{**})}}, \\
				y_2^{**}=&~\dfrac{\dfrac{\alpha}{r+\mu}(1-p_S)\varepsilon_2 \Lambda}{\alpha(1-p_S)+\dfrac{\mu n^*}{y_0^{**}+(1-p_I)(y_1^{**}+y_2^{**})}}. \\
			\end{aligned}\right.
		\end{equation}
		Then, it can be judged that $y_0^{**}$ and $y_1^{**}$ have the same sign, because the denominators are equal and the numerators are both positive. The case $y_0^{**}<0$, $y_1^{**}<0$ is possible only if their denominators $\alpha +\mu n^*/[y_0^{**}+(1-p_I)(y_1^{**}+y_2^{**})]<0$. In this case, we have $\alpha(1-p_S) +\mu n^*/[y_0^{**}+(1-p_I)(y_1^{**}+y_2^{**})]<\alpha +\mu n^*/[y_0^{**}+(1-p_I)(y_1^{**}+y_2^{**})]<0$, which means $y_2^{**}<0$ as well. Then, we have $y_0^{**}+(1-p_I)y_1^{**}+(1-p_S)(1-p_I)y_2^{**}<0$ because $y_0^{**}<0$, $y_1^{**}<0$, $y_2^{**}<0$, which is inconsistent with our previous conclusion. Therefore, $y_0^{**}>0$, $y_1^{**}>0$ must hold.
		
		The remaining question is the sign of $y_2^{**}$. According to the second equation in Eq.~(\ref{system}), we have
		\begin{equation}\label{c5}
			y_0^{**}=\frac{\alpha}{r+\mu} x_0^{**} [y_0^{**}+(1-p_I)(y_1^{**}+y_2^{**})]/n^*.
		\end{equation}
		Since we have $y_0^{**}>0$, we know $x_0^{**} [y_0^{**}+(1-p_I)(y_1^{**}+y_2^{**})]>0$ as well. The sign of $x_0^{**}$ can be easily judged: if $x_0=0$, then $\dot{x}_0=\varepsilon_0 \Lambda>0$ so that $x_0<0$ never happens if the system starts evolving from a meaningful initial state where $x_0>0$. Therefore, $x_0^{**}>0$ and $y_0^{**}+(1-p_I)(y_1^{**}+y_2^{**})>0$ must hold as well. Then, $y_2^{**}>0$ is ensured by the third equation in Eq.~(\ref{c4}). 
		
		Therefore, $\mathcal{R}_0>1$ is a sufficient and necessary condition for $y_0^{**}>0$, $y_1^{**}>0$, $y_2^{**}>0$.
		
		To check if the solution of $y_0^{**}$, $y_1^{**}$, and $y_2^{**}$ really exists, we can study the existence of $y_0^{**}+(1-p_I)(y_1^{**}+y_2^{**})$. Then, the solution of $y_0^{**}$, $y_1^{**}$, and $y_2^{**}$ can be naturally obtained by Eq.~(\ref{c4}). For convenience, we denote $Y=y_0^{**}+(1-p_I)(y_1^{**}+y_2^{**})$. Multiplying the three equations of $y_0^{**}$, $y_1^{**}$, $y_2^{**}$ in Eq.~(\ref{c4}) by $1$, $1-p_I$, $1-p_I$, and adding them together, we get
		\begin{equation}\label{c6}
			Y
			=
			\dfrac{\dfrac{\alpha}{r+\mu}\varepsilon_0 \Lambda}{\alpha+\dfrac{\mu n^*}{Y}}
			+
			(1-p_I)\dfrac{\dfrac{\alpha}{r+\mu}\varepsilon_1 \Lambda}{\alpha+\dfrac{\mu n^*}{Y}}
			+
			(1-p_I)\dfrac{\dfrac{\alpha}{r+\mu}(1-p_S)\varepsilon_2 \Lambda}{\alpha(1-p_S)+\dfrac{\mu n^*}{Y}},
		\end{equation}
		which can be simplified as follows when $Y\neq 0$.
		\begin{equation}\label{c7}
			aY^2+bY+c=0,
		\end{equation}
		where
		\begin{equation*}
			\left\{\begin{aligned}
				a=&~\dfrac{\alpha}{\mu n^*}(1-p_S), \\
				b=&~ 1-p_S+1-\mathcal{R}_0+\frac{\alpha}{r+\mu}p_S(\varepsilon_0+(1-p_I)\varepsilon_1), \\
				c=&~\frac{\mu n^*}{\alpha}(1-\mathcal{R}_0). \\
			\end{aligned}\right.
		\end{equation*}
		We can see that $a>0$ always holds. The signs of $b$ and $c$, however, depend on $\mathcal{R}_0$. Then, the simple use of Vieta's theorem can judge the existence of $Y$. When $\mathcal{R}_0<1$, we have $c/a>0$ and $-b/a<0$; therefore, both roots are negative. When $\mathcal{R}_0>1$, we have $c/a<0$; therefore, one root is positive and the other is negative. The positive root is the unique solution of $Y$ and $y_0^{**}+(1-p_I)(y_1^{**}+y_2^{**})$. Then, Eq.~(\ref{c4}) can provide the unique solution of $y_0^{**}$, $y_1^{**}$, and $y_2^{**}$ (note that the negative root of $Y$ cannot lead to positive $y_0^{**}$, $y_1^{**}$, and $y_2^{**}$ because we have previously shown $y_0^{**}>0$, $y_1^{**}>0$, and $y_2^{**}>0$ if $\mathcal{R}_0>1$).

	\end{appendices}

\end{document}